\newtheorem{mydef}{Definition}
\newtheorem{myprob}{Problem}
\newcommand{\pred}{\textit{pred}} %Add
\newcommand{\lhs}{LHS} %Add
\newcommand{\rhs}{RHS} %Add
\newcommand{\static}{static} %Add
\newcommand{\tool}{\textsc{Swapper}}
\newcommand{\trainset}{Training set}
\newcommand{\testset}{Testing set}
\newcommand{\Sk}{\textsc{Sketch}~}
\newcommand{\sk}{\textsc{Sketch}}
\newcommand{\SEARCH}{\textsc{Search}}
\newcommand{\TEST}{\textsc{Test}}
\newcommand{\TRAIN}{\textsc{Train}}
\newcommand{\figlabel}[1]{\label{f:#1}}
\newcommand{\seclabel}[1]{\label{s:#1}}
\newcommand{\figref}[1]{Fig.~\ref{f:#1}}     % for use in text
\newcommand{\soopt}{Auto-generated}
\newcommand{\cpopt}{Baseline}
\newcommand{\rewriter}{\textbf{simplifier}}
\newcommand{\performanceFunction}{\textbf{fopt}}
\begin{document}

\title{Automatic Generation of Formula Simplifiers based on Conditional Rewrite Rules}
\titlerunning{Synthesizing Formula Simplifiers}  % abbreviated title (for running head)
%                                     also used for the TOC unless
%                                     \toctitle is used
%
\author{Rohit Singh, Armando Solar-Lezama}
\authorrunning{Singh et al.} % abbreviated author list (for running head)
%
%%%% list of authors for the TOC (use if author list has to be modified)
%\tocauthor{Placeholder Name(s)}
%
\institute{Massachusetts Institute of Technology\\
\email{rohitsingh@csail.mit.edu} \\ \email{asolar@csail.mit.edu}}

\maketitle              % typeset the title of the contribution

\begin{abstract}
This paper addresses the problem of creating simplifiers for logic formulas based on conditional term rewriting. In particular, the paper focuses on a program synthesis application where formula simplifications have been shown to have a significant impact. We show that by combining machine learning techniques with constraint-based synthesis, it is possible to synthesize a formula simplifier fully automatically from a corpus of representative problems, making it possible to create formula simplifiers tailored to specific problem domains. We demonstrate the benefits of our approach for synthesis benchmarks from the SyGuS competition and automated grading.
\keywords{program synthesis, formula rewriting}
\end{abstract}

 \section{Introduction}
\label{sec:intro}
%TODO: Move fig 3 to intro
% SMT solvers have transformed a variety of fields ranging from verification~\cite{bjorner2014applications} and synthesis~\cite{}, to automated testing~\cite{}, bug-finding~\cite{}, and even areas such as operations research and planning~\cite{} \asolar{Citations needed}. 

Formula simplification plays a key role in SMT solvers and solver-based tools. SMT solvers, for example, often use local rewrite rules to reduce the size and complexity of the problem before it is solved through some combination of abstraction refinement and theory reasoning~\cite{z3,boolector}. 
Moreover, many applications that rely on solvers often implement their own formula simplification layer to rewrite formulas before passing them to the solver~\cite{bitvectorrewrites,jeeves,bbr,sparklangvcsmt,muz}. 

In some cases, the motivation to build a simplifier in the application is to be able to reduce the formula on the fly as it is being constructed, rather than constructing a large formula that then gets simplified by the solver. A more fundamental motivation, however, is that formulas generated by a particular tool often exhibit patterns that can be easily exploited by a custom formula simplifier but which would not be worthwhile to exploit in a general solver.

One domain where this simplification is particularly important is constraint-based synthesis. In particular, this paper will focus on the  \sk{} synthesis system~\cite{sketch}. \sk{} is an open source tool which has been used in systems published in multiple top-tier conferences~\cite{parsketch,jsketch,minimize,story,sqlsk,autograder,DBLP:conf/pldi/Solar-LezamaJB08}, and it includes a publicly available benchmark suite of synthesis problems from several important domains including storyboard programming~\cite{story}, query extraction~\cite{sqlsk} and automated grading (Autograder)~\cite{autograder}, sketching Java programs~\cite{jsketch},  as well as 
benchmarks available from the annual synthesis competition (SyGuS)~\cite{sygus} . In particular, \sk{} has hundreds of benchmarks for Autograder problems obtained from student submissions to introductory programming assignments on the edX platform. 

The \sk{} synthesis system includes a formula simplifier, that has a tremendous impact on the size of the formulas. Without these simplifications, problems that solve in seconds time-out instead. 
The simplification module in \sk{} has been under continuous development for eight years, so at this point the code is quite mature and efficient, but it took significant effort to get there, and many hard-to-identify bugs were introduced in the process of developing it, as evidenced from the commit logs in the project repository~\cite{sketchbend}.

In this paper, we present a methodology for automatically generating a formula simplifier from a corpus of benchmark problems. We also present \tool{}, a framework implements this methodology. 
The input to \tool{} is a corpus of formulas from problems in a particular domain. Given this corpus, \tool{} generates a formula simplifier tailored to the common recurring patterns observed in this corpus.  

The system operates in four phases.  In the first phase, \tool{} uses representative sampling to identify common repeating sub-terms in the different formulas in the corpus. In the second phase, these repeating sub-terms are passed to the rule synthesizer which generates conditional simplifications that can be applied to these sub-terms when certain local conditions are satisfied. These conditional simplifications are the simplification rules which in the third phase must be compiled to actual C++ code (the simplifier) that will implement these simplifications. Not all the simplification rules discovered in the third phase will actually improve the run-times for solving the formula. For example, some may reduce the formula size without making it easier to solve, and some may actually prevent other rules from being applied. In order to identify an optimal subset of rules to incorporate into the final simplifier, the system relies on a fourth phase of auto-tuning that evaluates combinations of rules based on their empirical performance on a subset of the corpus (\trainset{}).  The resulting synthesizer will be shown to work effectively not just for problems in the corpus, but also for other problems from the same domain.

Overall, our work makes the following contributions:
\begin{enumerate}
\item %methodology for generating simplifiers%Use Solver instead of SMT/SAT everywhere
We demonstrate how to automate the process of generating conditional rewrite rules specific to the common recurring patterns in formulas from a given domain.
\item We demonstrate the use of autotuning to select an optimal subset of rules and generate an efficient simplifier.
\item We evaluate our approach on multiple domains from the \sk{} synthesis tool and show that the generated simplifiers improve the performance (running times) of \sk{} tool by 30-50\% than the existing hand-crafted simplifier in \sk{}.
\end{enumerate}

 \section{Related Work}
%TODO mention LLVM ASKARM
A pre-processing step in constraint solvers and solver-based tools (like Z3, Boolector~\cite{boolector}, \sk{} etc) is an essential one and term rewriting has been extensively used as a part this pre-processing step ~\cite{z3prez,bitvectorrewrites,jeeves,bbr,sparklangvcsmt,muz}. These pre-processing steps are very important and can have a significant impact on performance. As others have observed and our experiments confirm,
however, rewrites that help for one class of problems may not be effective
for problems for a different domain. The task of developing domain-specific simplification rules is time consuming and error-prone, and our technique aims to automate the discovery of such rules.

%Most of these techniques, however, usually work in the context of a particular system or domain e.g. bit vector theory or specific \sk{} domains; a technique that may seem to work very well with tool A, may not work with tool B. As a domain specific example, \cite{WintersteigerHM13} focuses heavily on word-level simplifications like application of rewrite rules to avoid exponential blowup during construction of the problem (along with other techniques for optimization). This task of developing domain-specific simplification rules is time consuming and error-prone, and our technique aims to automate the discovery of such rules.

Each part of our framework solves an independent problem and is different from the state of the art, specialized for our purposes. A recent paper introducing Alive~\cite{llvmrr}, a domain specific language for specifying, verifying, and compiling peephole optimizations in LLVM is the closest to our framework as a whole. Their rewrite rules are guarded by a predicate, they use static analyses to find the validity of those guards, they verify the rules and then compile them to efficient C++ code for rewriting LLVM code: all similar to our phases. However, their system is targeted towards the compilers community and relies upon the developers to discover and specify rewrite rules. Our work is targeted towards the solver community and automatically synthesizes the rewrite rules from benchmark problems of a given domain.

Identification of recurrent sub-graphs from benchmark DAGs is similar in essence to the Motif discovery problem \cite{motif} which is famous because of its application in DNA fingerprinting\cite{motifbook}. This is a very active area of research and recently we have seen some attempts to use sampling \cite{samplingmotif}, machine learning \cite{svmotif} and distributed algorithms \cite{motifscale,BaldwinCLSLV04,SeejaAJ09} to compute the Motifs (statistically significant recurrent sub-graphs) as quickly as possible. Our DAGs, on the other hand, have labeled nodes (labeled with operation types) and our motifs have to account for the fact that some opperations are commutative but not others, which makes direct translation to Motif discovery problem more difficult.

In the superoptimization community, people explore all possible equivalent programs and find the most optimal one. It would not make sense to do that for formulas. But \cite{peephole} came up with this idea of packaging the superoptimization into multiple rewrite rules similar to what we are doing here. Although it looks similar in spirit to our work, there are a few differences. Most importantly, \cite{peephole} uses enumeration of potential candidates for optimized versions of instruction sequences and then checks if it is indeed the most optimal version. Whereas, we use a hybrid approach that primarily relies on constraint based synthesis for generating the rules, which offers a possibility of specifying a structured grammar for the functions. 

The third phase in \tool{} automatically generates simplifier's code (representing an abstract reduction system) is similar to a term or graph rewrite system like Stratego/XT \cite{stratego,LV97} or GrGEN.NET \cite{grgen}. Stratego/XT is a framework for the development of program transformation systems and GrGEN.NET is a Graph Rewrite Generator. They offer declarative languages for graph modeling, pattern matching, and rewriting. Both of these tools generate efficient code for program/graph transformation based on rule control logic provided by the user. 
We build upon their ideas and develop our own compiler because we already had special input graphs (DAGs with operations as labels of nodes) and an existing framework for simplification (\sk's DAG simplification class). Our strategy can be compared with LALR parser generation \cite{LALR} where the next look-ahead symbol helps decide which rule to use. In our case as well we keep around a set of rules that are potentially applicable based on what the algorithm has seen.
  \section{Problem Overview}

In order to describe the details of our approach, we first describe a few key features of the \sk{} synthesis system which will serve both as a component and as a target for \tool{}. \sk{} is an open source system for synthesis from partial programs. A partial program (also called a sketch) is a program with holes, where the potential code to complete the holes is drawn from a finite set of candidates, often defined as a set of expressions or a grammar. Given a partial program with a set of assertions, the \sk{} synthesizer finds a completion for the holes that satisfies the assertions for all inputs from a given input space. 

\sk{} uses symbolic execution to derive a predicate $P(x,c)$ that encodes the requirement that given a choice $c$ for how to complete the program, the program should be correct under all inputs $x$.
The predicate $P(x,c)$  is represented internally as a Directed Acyclic Graph. The DAG represents an expression tree where equivalent nodes have been collapsed into a single node by structural hashing~\cite{hashingref}. The grammar for this language of expressions is shown in \figref{dag_example}. The formula simplification pass that is the subject of this paper is applied to this predicate $P(x,c)$ as it is constructed and before the predicate is solved in an abstraction refinement loop based on counterexample guided inductive synthesis (CEGIS). 

\vspace{-20pt}
\begin{figure}[ht]
\centering
\footnotesize
\[
\vspace{-10pt}
\begin{array}{rlr}
e = &
boolop(e_1, e_2) & \mbox{apply a boolean operator on } e_1, e_2 \\
& arithop(e_1, e_2) & \mbox{apply an arithmetic operator on } e_1, e_2 \\
& src(id) & \mbox{universally quantified input}\\
& ctrl(id) & \mbox{existentially quantified control}\\
& arr_r(e_i,e_a) & \mbox{read at index $e_i$ in array $e_a$}\\
& arr_w(e_i,e_a,e_v) & \mbox{write at index $e_i$ value $e_v$ in array $e_a$}\\
& arr_{create}(e_0,...,e_i) & \mbox{create a new array }\\
& n(c) & \mbox{integer/Boolean constant with value $c$}\\
& mux(e_c, e_0, \ldots e_i) & \mbox{multiplexer chooses based on value $e_c$}\\
& assert(e) & \mbox{assertion of a boolean expression $e$}\\
% Don't mention ARRASS, ACTRL, UFUN or DST, not relevant.
% 
\end{array}
\]
\caption{The language of predicate expression in \sk.}
\figlabel{dag_example}
\end{figure}
\vspace{-10pt}

\subsection{Formula Simplification using Conditional Rewrite Rules}
\label{sec:rr_main}

A rewrite rule  is a pattern matching rule that indicates that when the guard predicate in the center is satisfied, one can replace the pattern on the left with the pattern on the right without changing the semantics of the overall formula.
For example, a simple replacement rule is the following:
\[
	mux(t, c, d) \xrightarrow{~~~c=d~~~}  c
\]
This rule indicates that a term  $mux$ that chooses between $c$ and $d$ based on a condition $t$ can be replaced with the term $c$ as long as $c$ and $d$ can be proved to be always equal, which will be true, for example, if $c$ and $d$ are identical subexpressions---this is easy to detect in the DAG representation as the subexpressions will be represented by the same node because of structural hashing.

The predicate may involve equality, as in the example above, or inequalities; for example,
\[
	or(lt(a, b) , lt(a, d)) \xrightarrow{~~~b<d~~~}  lt(a, d)
\]
Note that in order to apply the rule, the system must be able to guarantee that $b<d$ holds for all possible values of the inputs to the sub-formula. In the \Sk{} simplifier, this is done by running a simple dataflow analysis over the graph to infer ranges of values for every term~\cite{sketchthesis}. 

\begin{mydef}
\label{subsec:rr_def}
A \emph{Conditional Rewrite Rule} is a triple $(\lhs(x),\pred(x),\rhs(x))$ where $x$ is a vector of input variables and $\lhs$, $\rhs$ are expressions that include variables in $x$ as free variables and $\pred$ is a guard predicate defined over the same input variables and drawn from a restricted grammar. The triple satisfies the following formula: 
$$ \forall x. \pred(x) \implies \left( \lhs(x) = \rhs(x)\right)$$
We use the notation 
$
	\lhs \xrightarrow{~~~pred~~~}  \rhs
$
to denote the rewrite rules.
\end{mydef}

The definition of the rule is symmetric with respect to $\lhs$ and $\rhs$, but in practice the rule is going to be used by a system that searches for sub-expressions that match $\lhs$ and replaces them with the expression $\rhs$ if it can prove that the condition $\pred$ is guaranteed to be satisfied. For this reason, our system will search for rules where $\rhs$ is smaller than $\lhs$ with the hope that this will help reduce the size of the overall formula. 

There are two points about these rules that are important to highlight at this point. First, there is a trade-off between the strength of the predicate and the reduction that can be achieved by a rule: rules with weak predicates are easier to match than rules with strong predicates, but rules with strong predicates can offer more aggressive simplification. The second point, however, is that the rules that give the most aggressive size reduction are not necessarily the best ones; for example, a rule may replace a very large $\lhs$ expression with a small $\rhs$ but in doing so it may prevent other rules from being applied, resulting in a formula that is larger than what we would have gotten if the rule had not been applied at all. For these reasons, writing optimal simplifiers based on rewrite rules is a challenging task even for human experts, which motivates our approach of using synthesis and empirical autotuning methods to automatically discover optimal sets of conditional rewrite rules. 

 \section{Pattern Finding: Random Sampling based Clustering}
\label{chap:clustering}

We use a random sampling based clustering method to find ``recurrent'' pattern sub-graphs (Definition~\ref{def:pattern}) from a set of benchmark DAGs. We formalize the intuitive requirement of ``domain significance'' (Definition ~\ref{def:significant}), present our algorithm (Algorithm~\ref{alg:sampling} in the Appendix) that asymptotically finds ``domain significant'' patterns and prove it's correctness.

%The key idea behind our algorithm is that the most common patterns are the ones that will occur most often in a large sampling of patterns of a particular size. To this end, at each step, we first find a node and then a pattern of a fixed size above it. We do this repeatedly while grouping together any patterns that are ``similar''. We present the algorithm below and prove that it asymptotically finds patterns with correct domain significance (Definition ~\ref{def:significant}).  
\begin{mydef}[Pattern, Sub-pattern, Sub-graph]
\label{def:pattern}
A pattern in the context of \tool{} is a Directed Acyclic Graph (DAG) rooted at a node that represent a computation of a single value with each node being labeled by an SMT operation, constant or a typed variable. A sub-pattern or a sub-graph $P$ of a larger DAG $G$ is a pattern such that there exists a graph isomorphism between $P$ and $G$ that preserves the computation labels.     
\end{mydef}

\begin{mydef}[Domain Significance]
\label{def:significant}
Given a set $S$ of benchmark DAGs from a domain $D$, a sub-pattern DAG $P$ of size $N$ (number of operation nodes in the DAG) is said to be domain significant with probability $p$ where $p$ is the ratio of number of occurrences of the pattern $P$ in $S$ and the total number of patterns of size $N$ in $S$. The larger the value of $p$, more domain significant the pattern $P$ would be.     
\end{mydef}

%Before fixing this clustering method we tried a traditional k-means clustering based approach with feature vectors approximating ``similarity'' among the patterns. The approach worked well but required quite a lot of computational resources and time to achieve the same level of exhaustiveness as the random sampling based clustering method. In this paper, we will present the results obtained from the random sampling based method and avoid the details of the traditional clustering based approach.

\subsection{Sampling Algorithm}

To motivate the algorithm, we look at the requirement of approximating domain significance:to approximate the distribution given by the actual domain significance probabilities asymptotically (Definition ~\ref{def:significant}). This will be guaranteed if we ensure that every sample of size $N$ is equally likely to be one of all patterns of size $N$ in the benchmark DAGs (allowing repetitions). Because the benchmark set $S$ comprises of DAGs, we can always pick a node at random from $S$ and then try to find a pattern of size $N$ rooted at that node. If we can guarantee that for any fixed node in $S$, our sampling method at that node picks a pattern of size $N$ with the same probability across all patterns of size $N$ rooted at that node, and, that probability is independent of the chosen node then we satisfy the original requirement. Our algorithm maintains a boundary of collected nodes at any stage and picks another node from all available parents (including some NULL parents) randomly at each step. We present the algorithm in the appendix (Algorithm ~\ref{alg:sampling}) and sketch a proof of it's correctness in Theorem~\ref{th:sampling} below.

\begin{theorem}\label{th:sampling}
Given a set $S$ of benchmark DAGs and a number $N \geq 2$, Algorithm~\ref{alg:sampling} asymptotically approximates the underlying order of patterns of size $N$ based on the distribution of the domain significance probabilities.
\end{theorem}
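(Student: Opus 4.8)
The plan is to argue that, as the number of independent samples drawn by Algorithm~\ref{alg:sampling} grows, the empirical frequency with which a fixed pattern $P$ is returned converges to a value \emph{proportional} to the domain‑significance probability $p$ of $P$ from Definition~\ref{def:significant}. Since two probability distributions that are proportional induce the same ordering on patterns, this is exactly what ``asymptotically approximates the underlying order'' should mean, and the convergence of empirical frequencies to the true output probability $\Pr[\text{output}\cong P]$ is just the strong law of large numbers; so the entire content of the theorem is the computation of $\Pr[\text{output}\cong P]$. Following the informal discussion preceding the statement, I would reduce this to a single local claim: there is a constant $\pi_N$ depending only on $N$ and on the algorithm's padding scheme --- not on the benchmark set $S$ and not on $P$ --- such that for every concrete size‑$N$ occurrence $O$ (a rooted subgraph of some DAG in $S$), the algorithm returns exactly $O$ with probability $\pi_N$ when its random start node is the root $r(O)$ of $O$, and with probability $0$ otherwise. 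The hypothesis $N\ge 2$ is used here: it guarantees that the root of any size‑$N$ pattern is an operation node, so the start node may be taken to range uniformly over the $M$ operation nodes of $S$ (a start at a leaf or constant simply fails, contributing nothing).

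Granting the local claim, the global statement follows by summing over the $M$ start nodes, each picked with probability $1/M$, and over the occurrences isomorphic to $P$:
\[
   \Pr[\text{output}\cong P] \;=\; \sum_{O\,\cong\,P}\frac{1}{M}\,\pi_N \;=\; \frac{\pi_N}{M}\,\bigl|\{\text{occurrences of }P\text{ in }S\}\bigr|,
\]
whereas $p$ equals $\bigl|\{\text{occurrences of }P\}\bigr|$ divided by the (fixed) total number of size‑$N$ patterns in $S$; the two quantities therefore differ only by a factor that does not depend on $P$. A start node whose cone of influence has fewer than $N$ operation nodes, or one at which the algorithm makes ``unlucky'' NULL choices, contributes only a smaller‑than‑$N$ pattern or an outright failure; because this is a per‑node effect applied uniformly across all patterns, it merely rescales the normalizing constant and leaves the relative ordering intact.

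The crux --- and the step I expect to be the main obstacle --- is establishing the constant‑probability local claim, i.e.\ that the incremental growth process is unbiased across all size‑$N$ occurrences rooted at a fixed node. I would model the algorithm as maintaining the set of collected nodes together with a frontier of open operand slots, and at each step choosing one slot uniformly, or equivalently choosing uniformly from the candidate next‑nodes \emph{padded with NULL entries} so that the number of choices available at a step is a function of the partial pattern's current shape alone and never of how many actual nodes happen to sit below it in $S$. The argument then splits in two: (i) the probability of any completed choice‑sequence that builds a size‑$N$ pattern is the product of the reciprocals of these branching counts, and by the padding property this product is one and the same number for every such sequence; and (ii) the map sending a completed choice‑sequence to the occurrence it builds is $c$‑to‑one for a constant $c$ depending only on $N$, accounting for the orders in which an occurrence's slots may be filled and for the unordered operands of commutative operators together with the automorphisms of the pattern. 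Part (ii) is exactly where the commutative‑versus‑non‑commutative distinction noted in the related‑work discussion of Motif discovery bites: one must verify that the number of admissible building orders is genuinely uniform over all size‑$N$ occurrences, or else force the map to be $c$‑to‑one by canonicalizing the order of slot expansion. Getting this combinatorial bookkeeping exactly right, rather than merely plausible, is the delicate part; once it is in place, the remaining obligations --- uniform choice of the start node, disjointness of the choice‑sequence sets for distinct occurrences, and the law‑of‑large‑numbers step --- are routine.
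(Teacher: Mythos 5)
Your overall architecture is the same as the paper's: reduce the theorem to a local claim that every size-$N$ occurrence rooted at a fixed node is produced with one and the same probability $\pi_N$; obtain that constant by padding the per-step choice set with NULL entries so that the branching count at step $i$ depends only on $i$ and the in-degree bound $T$ (the paper uses $i(T-1)+1$, giving $\prod_{i=1}^{N-1}\frac{1}{i(T-1)+1}$ per completed choice sequence); handle the DAG-versus-tree issue by counting only the first edge reaching a node and turning later edges into NULLs; and then sum over uniformly chosen start nodes to conclude that the output distribution is proportional to occurrence counts and hence order-preserving. Your law-of-large-numbers framing and the ``proportional distributions induce the same order'' observation are implicit in the paper but not spelled out; that part is fine.

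The place where you stop --- part (ii), the claim that the map from completed choice sequences to occurrences is $c$-to-one with $c$ depending only on $N$ --- is exactly the step the paper handles differently, and arguably incorrectly: it asserts that due to the implicit BFS-tree construction ``there is only one way for a pattern to be picked,'' i.e.\ that the map is one-to-one. The BFS-tree device only fixes \emph{which} edges are used to reach each node; it does not fix the \emph{order} in which the $N-1$ non-root nodes are appended, and the number of admissible orders is the number of linear extensions of the pattern's tree order, which is not constant across patterns of the same size. For $T=2$ and $N=3$, a chain (root, operand, operand-of-operand) is reached by exactly one order, with probability $\frac{1}{2}\cdot\frac{1}{3}$, while a cherry (root with two operands) is reached by two orders, with probability $2\cdot\frac{1}{2}\cdot\frac{1}{3}$. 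So your instinct that part (ii) is the crux is correct, and leaving it open is a genuine gap in your proposal --- but it is the same gap that the paper's own proof papers over. Closing it would require canonicalizing the expansion order (as you suggest) or reweighting each sample by the inverse of its number of admissible orders; the algorithm as written does neither, so the uniform-probability claim, and with it the exact order-preservation statement, does not actually hold for bushy versus deep patterns of equal size.
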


\begin{proof}
As discussed above, it's sufficient to show that the algorithm samples every pattern of size $N$ rooted at a particular node with equal probability irrespective of what the node is. 

For simplicity, let's first assume that instead of labeled DAGs representing formulas, we have labeled directed binary trees in $S$. In this case, such a sampling method is easy to construct. We randomly pick a node from $S$ and then randomly pick from its (two) parents and at any stage pick randomly from all available parents until we have $N$ nodes. If we hit a node with no parents, we restart the sampling. Due to the fact that each node has equal number of parents, at step $i$ (first step being the choice between two parents), the number of available choices for picking will be $i+1$. This ensures that the probability of picking any pattern (possibly after some restarts) of size $N \geq 2$ is the same across the board i.e. is $\frac{1}{2}\times\frac{1}{3}\times...\times\frac{1}{N}$. This idea can be easily extended to $k$-ary trees. 

Moreover, for trees with variable but bounded in-degrees we incorporate ``shadow'' edges (represented by $NULL$ in the algorithm) to make sure that each node has the same number of parents (the bound on in-degrees) and picking a ``shadow'' edge will cause a restart of the sampling process. This will ensure that if we pick a pattern of size $N$ (again, possibly after some restarts), it will be equally likely to be any pattern of size $N$ from $S$. 

Now, for DAGs instead of trees, where edges can point to the same node again, we ensure this property by first picking a node from $S$ at random and then considering only the breadth first search tree rooted at the first chosen node for sampling - this ensures that for every DAG pattern rooted at this node, there is only one way to pick it and hence equal probability of it being picked as compared to any pattern rooted at that node. Once we find the nodes of the pattern, we consider all the edges that were already there for analysis, construct the pattern and count it appropriately in the samples. 

Implementation of the algorithm incorporates the BFS tree construction on the fly by maintaining a list of already seen nodes ($boundayNodes \cup nodes$) and treating any new edge that leads to them as being a ``shadow'' edge (NULL in the pseudo-code). It then extends the list of possible choices to $size(nodes)\times(T-1) + 1$ which is the number of valid possibilities one will have if this were a $T$-ary tree. The probability for any pattern $P$ of size $N \geq 2$ rooted at the first picked node ($n_0$ in the pseudo-code) to be picked by the sampling algorithm (given that a pattern has been picked, potentially after some restarts) is $\frac{1}{T}\times\frac{1}{2*T-1}\times...\times\frac{1}{(N-1)*T-(N-2)}$ because each node will be picked randomly from a list ($boundaryNodes$) of size $i\times(T-1) + 1$ at step $i$ for $1 \leq i \leq N-1$. Note that due to the implicit BFS tree construction, there is only one way for a pattern to be picked and the probability expression is independent of the pattern or the node picked in the beginning. 

\end{proof}

% Given the \searchset{} DAGs and the fixed size $N$ (number of operation nodes in the DAG), \tool{} randomly selects a node across all DAGs and decides whether or not to select a parent by tossing an unbiased coin. \tool{} recursively makes the decision on every included node. At any point, if the number of selected nodes becomes $N$ it outputs the DAG with the incoming inputs to the pattern being treated as input or source nodes, and, moves on to the next sampling. If the chosen size of the DAG is less than $N$ then \tool{} ignores the current sample and restarts the sampling process.

\subsection{Similarity Metrics}
To group DAGs together into clusters, \tool{} uses the following metrics.
\textbf{(1) Signature of the DAG}: To identify equivalent patterns, \tool{} builds a signature string of the DAG which is an encoding of the DAG with parenthesis establishing the types of nodes, ordered names of input nodes and the parent-child relationships. On top of this, the ordering of parents for commutative operations is made lexicographic to ensure that patterns are mapped to the same signature when they are equivalent because of commutativity. 
\textbf{(2) Static Analysis information from the benchmark DAGs}: To check the $\pred$ part of the rules, the generated simplifier performs a static analysis on the DAG and identifies assumptions that are valid for each node. The assumptions for input or source nodes are maintained with the DAG signature. Note that each DAG can have multiple assumptions and all those assumptions are used later used during the rule generation phase but are ignored for aggregating patterns.

\subsection{Stopping Criterion}
\tool{} chooses a large number $M$ (starting with $50000$) but continues sampling until the total number of patterns with probability of occurrence greater than a threshold $\epsilon$ converges i.e. the next $M$ samples do not result into a significant change in the number of such patterns. In our experiments, we used $\epsilon=0.02$.

\section{Rule Generation: Syntax-Guided Synthesis}
\seclabel{rulegen}
In this phase, \tool{} finds corresponding rewrite rules for the set of common patterns obtained from the Pattern Finding phase.
For each pattern, the Pattern Finding phase also collects a 
 set of predicates that can be valid for the inputs of each pattern at different occurrences, as computed by a static analysis on each benchmark in the corpus.

%Present this as a SyGus problem

\subsection{Problem Formulation} 

\tool{} needs to find correct rewrite rules (\ref{sec:rr_main}) which have a given \lhs{}. Additionally, we want to avoid rules with predicates that will never hold in practice, so we focus on rules with predicates that are implied by the predicates found by static analysis during the Pattern Finding Phase. 

\begin{myprob}\label{prob:sketch}
Given a function \lhs(x), assumptions \static(x) discovered by static analysis for a given occurrence of the \lhs{} pattern,
and grammars for \pred(x) and \rhs(x), find suitable candidates for \pred(x) and \rhs(x) which satisfy the following constraints:

\begin{enumerate}
\item $\forall x: \static(x) \implies \pred(x)$
\item $
\forall x: \text{ } if\text{ }(\pred(x))\text{ } then
	\text{ }  (\lhs(x) == \rhs(x))
$
\item $size($\rhs$)$ < size($\lhs$), where size($expr$) is the number of terms  in the expression $expr$
\item $\pred(x)$ is the weakest predicate (most permissive) from the predicate grammar that enables the $\lhs$ to $\rhs$ transformation
\end{enumerate}
\end{myprob}

\subsubsection{The space of predicates}
For \pred{} \tool{} employs a simple Boolean expression generator that considers conjunctions of equalities and inequalities among variables.
These predicates are inspired by existing predicates present in the rules in the Sketch's  hand-crafted simplifier and are easier to check statically than more complicated predicates. 

\subsubsection{Grammar for \rhs}

The template for \rhs{} simulates the computation of a function using temporary variables. This computation can be naturally interpreted as a DAG in topologically sorted order. Essential grammar for the generator for \rhs{} is shown here:
\[
\begin{array}{rl}
\rhs(x) \equiv & let ~t_1 = \textbf{simpleOp}(x); \ldots t_{k} = \textbf{simpleOp}(x,(t_1,...,t_{k-1}));\\
~&in ~t_{k}
\end{array}
\]
where \textbf{simpleOp} represents a single operation node (e.g. AND, OR etc) with its inputs being selected from the arguments. We put a strict upper bound on $k$ as the number of nodes in the \lhs{} for which the \rhs{} is being searched for. The last computed temporary is interpreted as the output node of the DAG.

\subsection{Correctness Constraint} %how to connect everything

Setting apart constraint number 4 regarding optimality of $\pred$ in the problem definition \ref{prob:sketch}, this problem can be seen as a classic synthesis problem. We will enforce the fourth constraint on top of solutions to this synthesis problem.

$$ \exists c_p c_r \forall x \text{ } \left(\static(x) \implies \pred(x,c_p)\right) \wedge \pred(x,c_p) \implies \left(\lhs(x) = \rhs(x,c_r)\right) $$

where $c_p$ and $c_r$ are the choices the solver needs to make to get a concrete \pred(x) or \rhs(x). More concretely, these are the choices of: (i) when to expand the grammar or when to use a terminal, and (ii) which subset of inputs to choose for a particular operation (equality in \pred{} and \textbf{simpleOp} in \rhs). 

We explored two different techniques for synthesizing such rewrite rules: (1) Symbolic \sk{} based synthesis of rules, and (2) Enumerative search with heuristics. \tool{} uses a hybrid approach to get the best of the both aforementioned techniques: scalability and exhaustiveness. We describe the hybrid technique briefly. 

\subsection{Hybrid Enumerative/\sk{}-based synthesis}
\label{subsec:ruleswithsketch}

\tool{} breaks the SyGus Problem into two parts.
\textbf{(1) Constraints and optimizations on predicates}:  \tool{} uses the enumerative approach, generates all possible candidate predicates from the specification grammar and checks for their validity based on collected assumptions $\static(x)$. It applies various heuristics and optimizations including predicate refinement (\ref{susec:predrefine}) on top of the symbolic synthesis to scale the overall synthesis process while maintaining it's exhaustive nature.
\textbf{(2) Synthesis of $\rhs$}: \tool{} hard-codes a predicate and realizes the $\rhs$ synthesis problem in \sk{} using the \textbf{generator} and \textbf{minimize} features (\cite{skthesis} \cite{minimize}) of the \sk{} language. \tool{} uses generators to recursively define the template for $\rhs$, and \textbf{minimize} to find the smallest possible \rhs(x) for a fixed \pred(x).

  \begin{figure}[!htbp]
   \begin{minipage}{0.55\textwidth}
			\centering

            \includegraphics[width=0.85\linewidth]{./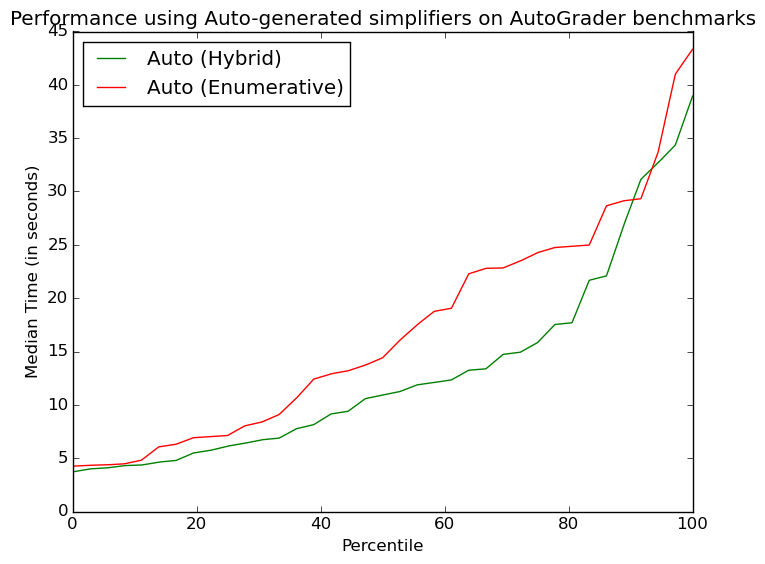}
            \caption{Comparison of performance of \sk{} using the best \soopt{} simplifiers from Hybrid and Enumerative techniques on Autograder benchmarks}
            \label{fig:enumhybrid}

          \end{minipage}
    \begin{minipage}{0.45\textwidth}
    \begin{algorithm}[H]
    \scriptsize
     %\scriptsize
     %\SetLine % For v3.9
     %\SetAlgoLined % For previous releases [?]
     \SetKwData{Left}{left}\SetKwData{This}{this}\SetKwData{Up}{up}
     \SetKwFunction{Union}{Union}\SetKwFunction{FindCompress}{FindCompress}
     \SetKwInOut{Input}{input}\SetKwInOut{Output}{output}
     \Input{\lhs(x), allPreds}
     \Output{valid rewrite rules of the form
     $(\lhs,\pred_i,\rhs_i)$}

     $G_{\Rightarrow} \leftarrow $ ImplicationGraph(allPreds) \\
     %$\pred_0 = \textbf{False}$\;
     %\textbf{seenPreds} = $\{\pred_0\}$\;
     \While{$\neg \text{isEmpty}\left(G_{\Rightarrow}\right)$}{
      $\pred \leftarrow \text{popLeaf}\left(G_{\Rightarrow}\right)$ \\
%         $\text{  }\pred_{i+1} \notin \textbf{seenPreds}$ \\
%         $\text{  }\forall x\text{ }\pred_i(x) \implies \pred_{i+1}(x)$ \\
%         $\text{  }\exists x\text{ }\pred_i(x) \neq \pred_{i+1}(x)$ \\
	 $\rhs \leftarrow \sk\text{Synth}\left(\lhs,\pred\right)$  \\
      \eIf{$\rhs =$ NULL}{
      	\ForEach{$\pred' \in G_{\Rightarrow} \wedge \pred \Rightarrow \pred'$}{
        	removePred$\left(\pred',G_{\Rightarrow}\right)$
        }
       }{
        Output rule $(\lhs,\pred,\rhs)$
      }
     }
     \caption{Refinement method}
     \label{alg:pred_refinement}
    \end{algorithm} 
        \end{minipage}
        
  \end{figure}

\subsubsection{Predicate Refinement}
\label{susec:predrefine}
\tool{} constructs the implication graph ($G_{\Rightarrow}$) of all available predicates and iteratively finds $\rhs$ for the least applicable predicates (leaves in the graph $G_{\Rightarrow}$) at a given stage. Note that this heavily reduces the synthesis time when there's no possible simplification rule for a leaf predicate and \tool{} can prune out all predicates implied by it. The procedure is shown in Algorithm \ref{alg:pred_refinement}.

% \begin{figure*}[ht]
% \centering

%    \begin{minipage}[t]{0.55\textwidth}

%   \end{minipage}\hfill
%    \begin{minipage}[t][][t]{0.45\textwidth}
%     \centering
%     \raisebox{-0.13\height}{\includegraphics[width=1\textwidth]{./}}
%     \caption{Predicate refinement: graphical representation} \vspace{2mm}
%     \label{fig:pred_refinement}
%   \end{minipage}\hfill
% \end{figure*}

% \item Minimize the rule components using existing simplifier. This enables the \tool{} to avoid other \rhs{} candidates.

This hybrid technique has the benefit of being able to exhaustively search for rules of big sizes while making the core synthesis problem faster (fewer constraints) and highly parallelizable (multiple \sk{} instances with different predicates are run in parallel). Figure~\ref{fig:enumhybrid} shows the improvement of the auto-generated simplifier for AutoGrader benchmarks by using the Hybrid technique over just the Enumerative one (excluding \sk). Note that \sk{} does synthesis of rules guaranteeing their correctness for large but bounded inputs, hence, we fully verify the generated rules with z3~\cite{z3} as well by expressing them as SMT constraints before using them for code generation. 

% \begin{wrapfigure}{R}{0.5\linewidth}
% \centering
% \includegraphics[width=\linewidth]{./images/graphs/plots/AG_Auto_Enum_Hybrid.png}
% \caption{Comparison of performance of \sk{} using the best \soopt{} simplifiers from Hybrid and Enumerative techniques on Autograder benchmarks}
% \label{fig:enumhybrid}
% \end{wrapfigure}

\section{Efficient Simplifier Code Generation }
\label{codegen}

There are no readability constraints to be satisfied by the simplifier code since this is automatically generated by tool{}. This enables \tool{} to do certain optimizations. The most important one being that now it can share the burden of pattern matching among all the rules which is particularly useful when the $\lhs$ is the same for multiple rules. First, using this idea and an abstract predicate evaluation technique \tool{} encodes the process of rule application more efficiently than the earlier optimization phase. Second, to help incorporate all the ``symmetries'' of the rules, \tool{} enables pattern matching with DAGs which differ only in the order of parent nodes at some commutative operation(s). 

\tool{} hard-codes all aspects of the rule application including predicate evaluation and the replacement procedure as C++ code. Because of the engineered efficiency of rewriting, note that, in our experiments we will not consider simplification time because of it being a one-time negligible (always a fraction of a second) time-step as compared to further \sk{} solving.

\section{Auto-tuning Rules}

\tool{} uses OpenTuner~\cite{opentuner} to auto-tune the set of rules according to a performance metric (based on time, memory, size of DAGs etc). OpenTuner uses an ensembles of disparate search techniques simultaneously and quickly builds a model for the behavior of the optimization function treating it as a black box.

\subsection{Optimization Problem Setup}

\tool{} specifies the set of all rules to the tuner and creates the following two configuration parameters:
(1) A permutation parameter: for deciding the order in which the rules will be checked.
(2) Total number of rules to be used.

A lot of rules do not directly interact with each other and hence \tool{} reduces the search space by creating multiple permutation parameters for each set of rules that can interact with each other e.g. any two rules with overlapping patterns can potentially affect each other's applicability whereas any two rules with different base nodes in $\lhs$ will not interact with one another and can therefore be ordered in any way as desired for pattern checking.

The evaluation function (\performanceFunction) to be optimized takes as input a set of rules and returns a real number. This number corresponds to performance improvement of \sk{} on \trainset{} benchmarks after rewriting the benchmarks using the simplifier generated  (Section~\ref{codegen}) from the input set. Intuitively, \performanceFunction{} penalizes heavily for under-performing on any benchmark from the \trainset{} and rewards normally for over-performing. It returns the average of all these rewards. The auto-tuner tries to maximize this reward by trying out various subsets and orderings of rules provided to it as input while learning a model of dependence of \performanceFunction{} on selection of rules.

 \section{Experiments}
\newcommand{\doopt}{Hand-coded}

In order to test the effectiveness of our system, we focus on these three questions: (1) Can \tool{} generate good simplifiers in reasonable amounts of time and with low cost of computational power?
(2) How do the simplifiers generated by \tool{} affect SMT solving performance of \sk{} relative to the hand written simplifier in \sk{}?
(3) How domain specific are the simplifiers generated by \tool{}?
%\item How general is the \tool{} framework? %One piece of evidence, in a relatively short amount of time,reduction in formula sizes but it happens to be redundant, we can re-target

%Basic Setup
For evaluation of \tool{} on \sk{} domains, we compared the following three simplifiers:
\begin{enumerate}
\item \doopt: This is the default simplifier in \sk{} that has been built over a span of past eight years. It comprises of simplifications based on (a) rewrite rules that can be expressed in our framework(Subsection~\ref{sec:rr_main})  (b) constant propagation (c) structure hashing~\cite{hashingref} and (d) a few other complex simplifications that cannot be expressed in our framework  
\item \cpopt: This disables the rewrite rules that can be expressed in our framework from the \doopt{} simplifier but applies the rest of the simplifications (b)-(d).
\item \soopt: This incorporates the \tool{}'s auto-generated rewrite rules on top of the \cpopt{} simplifier.
\end{enumerate}

In the sections that follow, we elaborate on the details of the experiments.
% We observed that \tool{}'s \soopt{} simplifier outperformed the \cpopt{} simplifier by 2-3x time improvement and on average performed better than the existing \doopt{} simplifier. 

% For the fourth, we use two \tool{} generated simplifiers which were obtained from different domains on problems from the other domain. We show that the simplifiers perform well on the domain they were optimized for and not the other. For the fifth, we present a simple extension of \tool{} to generate a source to source simplifier for SMTLIB benchmarks and show that we can reduce the problem size by 20-30\%. 

\subsection{Domains and Benchmarks}
\newcolumntype{L}[1]{>{\raggedright\let\newline\\\arraybackslash\hspace{0pt}}m{#1}}
\newcolumntype{C}[1]{>{\centering\let\newline\\\arraybackslash\hspace{0pt}}m{#1}}
\newcolumntype{R}[1]{>{\raggedleft\let\newline\\\arraybackslash\hspace{0pt}}m{#1}}

\begin{figure}[ht]
\vspace{-0.5cm}
\begin{center}
\scriptsize
\begin{tabular}{ |  L{2.4cm} |  C{4.5cm} |  C{4.5cm} | }
  \hline
  \textbf{Domain} & \textbf{Benchmark DAGs Used} & \textbf{Avg. Number of Terms} \\ 
  \hline
  \textbf{Storyboard} & 34 & 4173 \\
  \hline
  \textbf{QBS} & 14 & 6580 \\
  \hline
  \textbf{AutoGrader} & 45 & 23289 \\
  \hline
  \textbf{Sygus} &  22 &  68366 \\
  \hline
\end{tabular}
\vspace{-0.5cm}
\end{center}
\caption{Domains and Benchmarks investigated for \tool{}'s evaluation}
\vspace{-0.5cm}
\end{figure}

We investigated benchmarks from various domains of \sk{} applications. 
Storyboard benchmarks correspond to specifications for storyboard programming problems~\cite{story}, QBS ones correspond to query extraction specifications~\cite{sqlsk}, Sygus corresponds to the SyGus competition benchmarks translated from SyGus format to Sketch specification~\cite{sygus} and AutoGrader ones are obtained from student's assignment submissions in the introduction to programming online edX course~\cite{autograder}. For each of these domains we picked suitable candidates for \tool{}'s application by (1) eliminating those benchmarks which did not have more than 5000 terms in the formula represented by their DAGs and those which took less than 5 seconds to solve - so that there's enough patterns and opportunity for improvement (2) removing those which took more than 5 minutes to solve - this was done to keep \tool{}'s running time reasonable because we need to run each benchmark multiple times during auto-tuning phase. Using these cutoffs, the total number of usable benchmarks for AutoGrader domain were reduced from 2404 to 45 and for Sygus from 309 to 22. Only a few (3-4) of the QBS and Storyboard benchmarks qualified in this screening process with around 5000-6000 terms each. \tool{} was not able to generate enough rules for auto-tuning from this small set of terms in the benchmarks (\trainset). 

% Out of these, Sygus and AutoGrader demonstrated the validity of the \tool{} framework after performing the complete experiment(Algorithm ~\ref{alg:methodology}). This is for two reasons: (1) We were able to find enough big benchmarks with similar structures to perform our analyses and (2) They both involve automatically generating \sk{} specifications for other specifications - this makes it more likely for them to contain a lot of domain specific recurring patterns as our experiments confirm. 

%Where do these come from
%Storyboard - from some paper
%Autograder - edX course

%Hand-picked the benchmarks: how?
%Originally I had these many and then I have these many 

%Out of all these, only this group has enough of them to get meaningful statistics.  The others were used only for the introduction but not for these full experiments.

%Cluster setup
\subsection{Synthesis Time and Costs are Realistic}

To generate a simplifier, \tool{} employs: (1) a single pass of Pattern Finding (2) a single or multiple passes of Rule Generation depending on whether the rules from first pass were satisfactory (3) multiple evaluations made by the Auto-Tuning phase. Many of these computations can be parallelized, especially, the rule generation (we may have up to 100,000 patterns to analyze) and evaluation of benchmarks (Auto-Tuner will potentially call this hundreds of times, this number was around 150 in our experiments).  

We used a private cluster running Openstack as the infrastructure for parallelized computations with parallelisms of 20-40 on two virtual machines emulating 24 cores, 32GB RAM of processing power each. 

We present a worst case estimate of CPU time and cost of computation taken by these phases based on our experiments and the Amazon Web Services~\cite{aws} estimator (using On-demand pricing for the same instances we used on our private cluster and rounding them up to account for volatility in the prices):

\begin{figure}[ht]
\vspace{-0.5cm}
\scriptsize
\begin{center}
\begin{tabular}{ |  L{2.5cm} |  C{1.8cm} |  C{2cm} |  C{1.7cm} |  C{2cm} |  C{1.6cm} | }
  \hline
  \textbf{Domain} & \textbf{Pattern Finding} & \textbf{Rule Generation}  & \textbf{Auto-Tuning} & \textbf{Total Time (in hours)} & \textbf{Cost}\\ 
  \hline
  \textbf{AutoGrader} & 3 hours & $1\text{ hour }\times5$ & $0.08 \times 150$ & 20 & \$22  \\
  \hline
  \textbf{Sygus} & 2 hours & $1\text{ hour }\times5$ & $0.1 \times 150$ & 22 & \$24  \\
  \hline
\end{tabular}
\vspace{-0.5cm}
\end{center}
\caption{Estimated CPU time (in hours) and cost of computation on the cloud}
\vspace{-0.5cm}
\end{figure}

In essence, \tool{} can be used to automatically synthesize a simplifier for less than $\$50$ spent on computation (around what one would pay a good developer for an hour's worth of work). Note that these wait times (around 20-22 hours or a day to get a simplifier) can be improved significantly by: (a) parallelization of the Pattern Finding phase (b) Setting a timeout for evaluation runs that are guaranteed to be worst than existing good runs and (c) increasing the parallelism available to \tool{}.

\subsection{\tool{} Performance}

% \rosin{Preliminaries: Corpus division, the layered approach to rule finding, multiple runs of \sk{} for each benchmark due to randomness: using the median, explain experimental}

To test the performance of \tool{} on \sk{} benchmarks from a particular domain, we first divided the corpus into three disjoint sets randomly  $(\SEARCH,\TRAIN,\TEST)$. The \SEARCH{} set was used to find patterns in the domain and \TRAIN{} set was used in the auto-tuning phase for evaluation. And finally, \TEST{} set was used to empirically confirm that the generated simplifier is indeed optimal for the domain. Moreover, we used $2$-fold cross validation to ensure that there was no over-fitting on \TRAIN{} set. We achieved this by exchanging \TRAIN{} and \TEST{} sets and  auto-tuning with the \TEST{} set instead of the \TRAIN{} set. We obtained similar performing simplifiers as a result and verified that there was no over-fitting. 

%How many rules we got?

%How much time it takes per domain

We implemented the evaluation of benchmarks in \tool{} as a python script that takes a set of DAGs as input, runs \sk{} on each of them multiple times (set to 5 in our experiments) and obtain the median values for time taken or memory consumed. In our experiments we limited the performance evaluation to time because we observed empirically that memory and time are highly correlated on these benchmarks.

We obtained $301$ rules for AutoGrader domain and $105$ rules for Sygus domain. The optimal simplifier for AutoGrader used $135$ of the rules and the one for Sygus used $65$ rules. 

\paragraph{Benefits over the existing simplifier in \sk}
\soopt{} simplifier reduced the size of the problem DAGs by 13.8\% (AutoGrader) and 1.1\%(Sygus) on average as compared to the size of DAGs obtained after running \doopt{} simplifier (Figure~\ref{fig:AG_SY_size}). On DAGs obtained after using \soopt{} simplifier on average, \sk{} solver performed better than on those obtained by using \doopt{} simplifier (Figure~\ref{fig:AG_SY_time}): (1) \soopt{} simplifier made \sk{} run faster on 80\% of the AutoGrader benchmarks and 90\% of the Sygus benchmarks (2) The average times taken by \sk{} to solve a benchmark simplified using \soopt{} simplifier were 13s (AutoGrader) and 8s (Sygus) as compared to 20s and 21s respectively for the \doopt{} simplifier. Figures~\ref{fig:AG_SY_size} and~\ref{fig:AG_SY_time} show distribution of sizes and times for \sk{} solving after applying all three simplifiers with percentiles on the x-axis. It clearly shows the consistent improvement in performance by applying the \soopt{} simplifier.

\begin{figure}[ht]
\centering
\begin{tabular}{cc}
\includegraphics[width=0.4\columnwidth]{./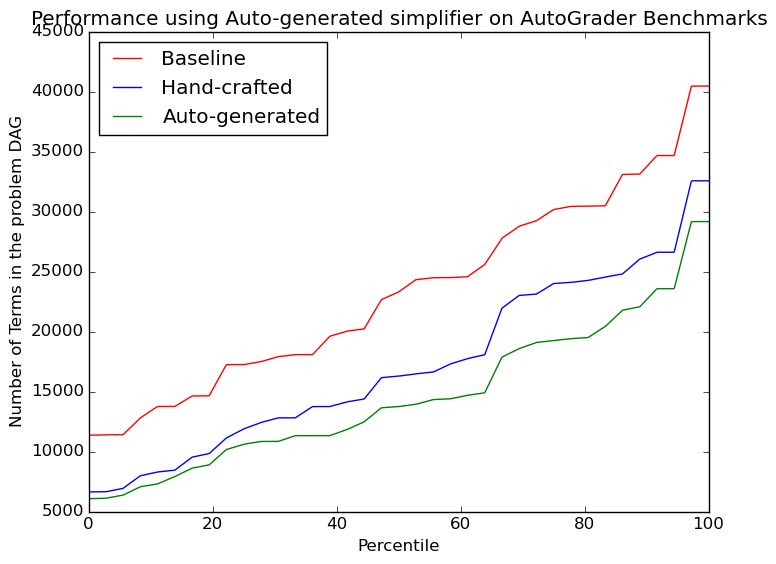} & \includegraphics[width=0.4\columnwidth]{./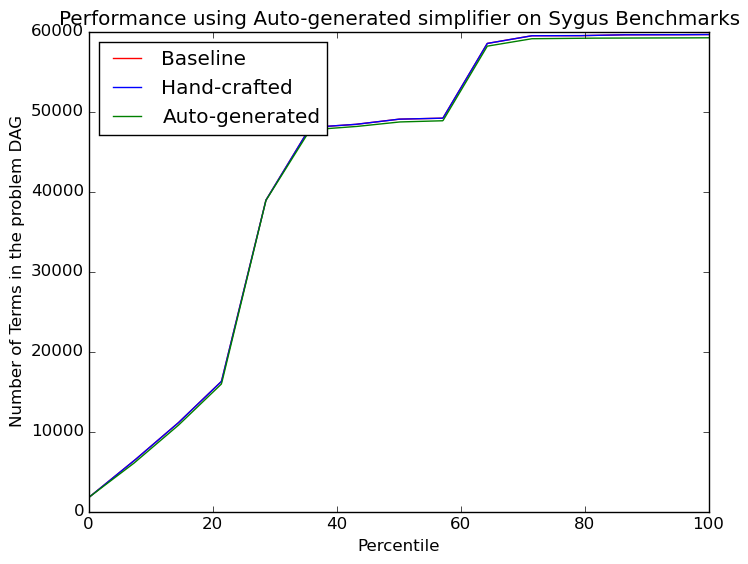}
\end{tabular}
\caption{Change in sizes with different simplifiers}
\label{fig:AG_SY_size}
\vspace{-0.5cm}
\end{figure}
 
 \begin{figure}[ht]
\centering
\begin{tabular}{cc}
\includegraphics[width=0.4\columnwidth]{./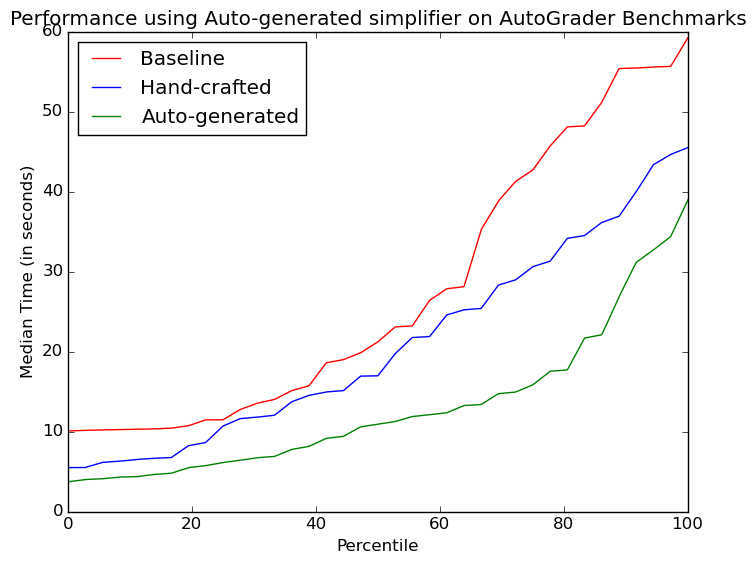} & \includegraphics[width=0.4\columnwidth]{./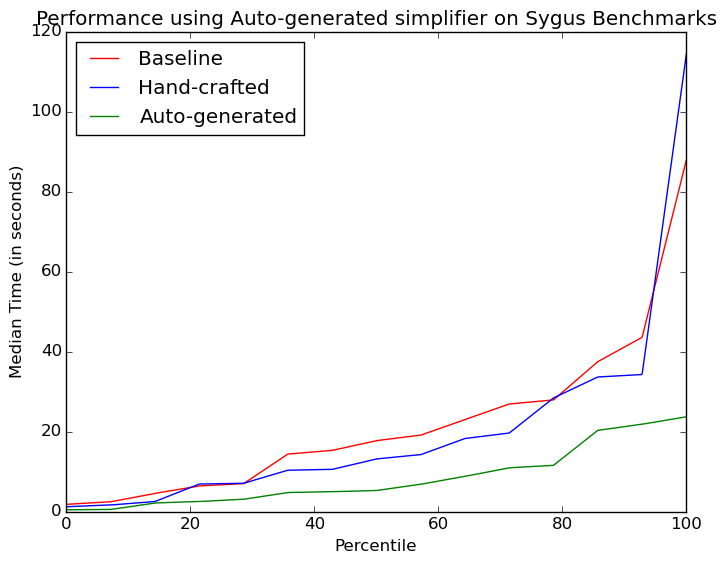}
\end{tabular}
\caption{Median \sk{} running time percentiles with Auto-generated simplifiers}
\label{fig:AG_SY_time}
\vspace{-0.5cm}
\end{figure}
 
\paragraph{Benefits over the unoptimized version of \sk}
\soopt{} simplifier reduced the size of the problem DAGs by 38.6\% (AutoGrader) and 1.6\%(Sygus) on average (Figure~\ref{fig:AG_SY_size}). Application of \soopt{} simplifier results into huge improvements in running times for \sk{} solver on both AutoGrader and Sygus benchmarks as compared to application of \cpopt{}: the average time of solving a benchmark was reduced from  27.5s (AutoGrader) and 22s (Sygus) to 13s and 8s respectively (Figure~\ref{fig:AG_SY_time}). 

 \begin{figure}[ht]
\centering
\begin{tabular}{cc}
\includegraphics[width=0.4\columnwidth]{./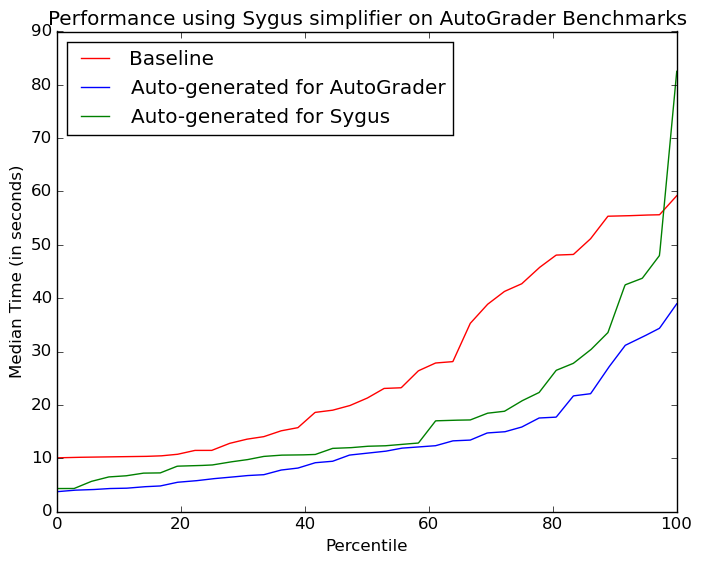} & \includegraphics[width=0.4\columnwidth]{./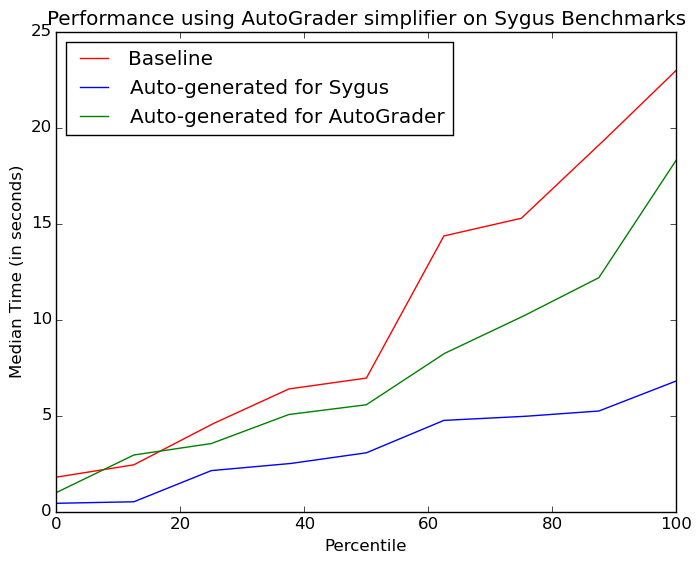}
\end{tabular}
\caption{Domain specificity of \soopt{} simplifiers: Time distribution}
\label{fig:domainspecific}
\vspace{-0.5cm}
\end{figure}

\paragraph{Domain specificity}
We took the \soopt{} simplifier obtained from one domain and used it to simplify benchmarks from the other domain and then ran \sk{} on the simplified benchmarks. Application of \soopt{} simplifier obtained from Sygus increased the \sk{} running times drastically on a few AutoGrader benchmarks when compared to the application of \cpopt{} simplifier (Figure~\ref{fig:domainspecific}), and, resulted into \sk{} running slower than after application of \soopt{} simplifier obtained from AutoGrader domain. Application of \soopt{} simplifier generated for AutoGrader domain reduced the running times of \sk{} solver on average as compared to the \cpopt{} on Sygus benchmarks but the times were still far away from the performance gains obtained by application of \soopt{} simplifier generated for the Sygus domain (Figure~\ref{fig:domainspecific}). This validates our hypothesis of these generated simplifiers to be very domain specific.

% \paragraph{SMTLIB source to source translations} %TODO generate graph

% We used $116$ SMTLIB benchmarks from QF\_LIA theory, and wrote scripts to translate SMTLIB  specifications to \sk{} DAG specifications and vice-versa. \tool{} auto-tuned these benchmarks for size reduction, generated a simplifier employing $62$ rules and reduced the sizes of all the specified formulas by $19\%$ on average. The distribution of size reduction is shown in Figure~\ref{fig:smtlib}. This size reduction did not affect the running time of these benchmarks with the z3~\cite{z3} solver. We hypothesize that the rewrites generated by \tool{} are redundant with the internals of the z3 solver.   

%  \begin{figure}[ht]
% \centering
% \includegraphics[width=0.50\columnwidth]{./} 
% \caption{formula size reduction for SMTLIB benchmarks}
% \label{fig:smtlib}
% \end{figure}

%size reduction graph

%% This defines the bibliography file (main.bib) and the bibliography style.
%% If you want to create a bibliography file by hand, change the contents of
%% this file to a `thebibliography' environment.  For more information 
%% see section 4.3 of the LaTeX manual.
\begin{singlespace}
\bibliography{main}
\bibliographystyle{splncs03} 
\end{singlespace}
 \pagebreak
 \section{Appendix}
 
%  \begin{figure}[ht]
% \centering
% \begin{tabular}{cc}
% \includegraphics[width=0.7\columnwidth]{} 
% \end{tabular}
% \caption{Impact of Rewrite Rules on Sketch}
% \label{fig:rr_impact_sketch}
% \vspace{-0.5cm}
% \end{figure}

 \subsection{Implementation Details: Summary/Supplemental Materials}
 
 %Talk about Input, Output, times
\subsubsection{Random Sampling based Clustering}
We built a custom parser for DAGs in Python and implemented the sampling method among other book-keeping operations. This parser takes a corpus of benchmark DAGs as an input and interacts with \sk{} back-end (\cite{sketchbend}, implemented in C++) to  find common patterns and valid local assumptions on the inputs to these patterns. This method is approximate but fairly simple and works well for our purposes in reasonable amounts of time. 

\begin{figure}
\begin{minipage}{\textwidth}
\begin{algorithm}[H]
\scriptsize
\SetKwData{Left}{left}\SetKwData{This}{this}\SetKwData{Up}{up}
\SetKwFunction{Union}{Union}\SetKwFunction{FindCompress}{FindCompress}
\SetKwInOut{Input}{input}\SetKwInOut{Output}{output}
\Input{$S$: Set of benchmark DAGs, ~ $N\geq2$ : Size of pattern to be found \\ $M$: Number of samples to be considered \\ $T$: maximum in-degree of any node in $S$}
\Output{$Samples$: patterns of size $N$ ordered by their approximate domain significance}
\BlankLine
  $i \leftarrow 0$,  $Samples \leftarrow map()$ \\
  \While{i < M}{
    $n_0 \leftarrow $ randomPick($S$) , ~
    nodes $\leftarrow List(n_0)$ \\
    \While{$size($nodes$) < N$}{
      boundaryNodes = $List()$, ~
      boundaryEdges = $List()$\\
      \ForEach{$n\leftarrow $ nodes}{
          \For{$e\leftarrow $ n.incomingEdges}{
      %\Left$\leftarrow$ \FindCompress{$Im[i,j-1]$}\;
      %\Up$\leftarrow$ \FindCompress{$Im[i-1,]$}\;
      %\This$\leftarrow$ \FindCompress{$Im[i,j]$}\;
              \If{$e.from \in \text{nodes} \cup boundaryNodes$}{
                  boundaryEdges.append(NULL) \\
              }\Else{
                  boundaryNodes.append(e.from) \\
                  boundaryEdges.append(e) 
              }
          }
      }
      \While{size(boundaryEdges) < $size(nodes) \times (T-1) + 1$}{
          boundaryEdges.append(NULL) 
      }
      $e^{*} \leftarrow \text{randomPick}(boundaryEdges)$ \\
      \If{$e^{*} = $ NULL} {
		\textbf{break}
      }
      \Else{
      	nodes.append($e^{*}.from$) \\
      }
    }
    \If{$size(nodes) == N$}{
        $Samples[Pattern(\text{nodes}))] $  += $ 1$ \\
        $i=i+1$
    }
  }
  $sortByValues(Samples)$

\caption{Probabilistic DAG Sampling}\label{alg:sampling}
\end{algorithm}%\DecMargin{5em}
\end{minipage}
\end{figure}

\textbf{Input:} The input to this phase are benchmark DAG files and the size of the patterns to be found. We use the standard \sk{} back-end DAG specification~\cite{skmanual}. A snippet of an example input file is provided here:
\begin{lstlisting}[language=TeX]
543 = ARR_W INT_ARR 429 536 542
544 = ARR_W INT_ARR 428 543 5
545 = NOT BOOL 482
546 = AND BOOL 361 545
547 = AND BOOL 481 546
\end{lstlisting}

Here the first entry in each line is the node ID followed by the type of operation, the output type of the operation and respective operands (maximum number of operands in our experiments = in-degree of any node in the DAG was 3)

\textbf{Output: } The output from this phase is a list of patterns sorted by their number of occurrence in the samples. We have attached one sample output file for two different domains in the supplemental materials. The list contains the rule signature to identify the rule. Here is an example snippet of such an output:

\begin{lstlisting}[language=TeX]
(AND|(AND|(S:N_3:BOOL)|(S:N_4:BOOL)|)
|(EQ|(S:N_1:INT)|(S:N_2:INT)|)|) 1966 246
(AND|(AND|(S:N_1:BOOL)|(S:N_2:BOOL)|)
|(AND|(S:N_3:BOOL)|(S:N_4:BOOL)|)|) 1702 3
(AND|(AND|(AND|(S:N_3:BOOL)|(S:N_4:BOOL)|)
|(S:N_2:BOOL)|)|(S:N_1:BOOL)|) 1230 4
\end{lstlisting}

The second column is the number of times this pattern occurred in the sampling phase and the third column is the number of different static analyses configurations it found at those locations where the pattern was found. This phase also produces a list of different static analysis configurations for the same domain. For example, for the pattern 

\texttt{(AND|(AND|(S:N\_3:BOOL)|(S:N\_4:BOOL)|)|(EQ|(S:N\_1:INT)|(S:N\_2:INT)|)|)}, 

some static analysis configurations are:
\begin{lstlisting}[language=TeX]
(AND|(AND|(S:N_3:BOOL:R(0-1))|(S:N_4:BOOL:R(0-1))|)
|(EQ|(S:N_1:INT:L(|-1|0|1|2|3|4|))|(S:N_2:INT:L(|-1|
))|)|)
##(AND|(AND|(S:N_3:BOOL:R(0-1))|(S:N_4:BOOL:R(0-1))|)
|(EQ|(S:N_1:INT:L(|-1|0|2|3|4|5|6|7|8|9|10|11|12|13|)
)|(S:N_2:INT:L(|-1|))|)|)
##(AND|(AND|(S:N_3:BOOL:R(0-1))|(S:N_4:BOOL:R(0-1))|)
|(EQ|(S:N_1:INT:L(|-1|0|1|2|3|5|29|30|31|32|34|35|36|
37|44|45|52|53|60|61|68|69|103|104|105|106|113|114|
115|116|118|))|(S:N_2:INT:L(|-1|))|)|)
\end{lstlisting}
Each source or ``S'' node is appended with interval analysis results i.e. it can be in a Range \texttt{R(3-10)} or a in a list \texttt{L(|3|5|7|)} or just be ``anything'' \texttt{T}.

\subsubsection{Rule Generation: Hybrid technique}

We implemented the enumerative part of the technique in Python and parallelized the \sk{} calls using Pathos library~\cite{pathos}. We also augmented the \sk{} back-end to implement some enumerative search heuristics and optimizations for generated rules in C++.

\textit{Optimizations for the Hybrid Technique:}
The synthesis search space is pruned using the following heuristics:
\begin{itemize}
\item Constrain the number of operation nodes in $\rhs$ based on nodes found in $\lhs$ e.g. if $\lhs$ doesn't contain any nodes that write to an array then $\rhs$ shouldn't either or if $\lhs$ contains only Boolean operations then limit the $\rhs$ grammar to Boolean operations as well. This helps reduce \sk{} solving time.  
\item Maintain signatures of each $(\lhs,\pred)$ pair accounting for symmetries so that solving for rules containing equivalent pairs never happens more than once.
\item In \sk{} specification, use the expected data-type of output nodes to pick the final operation in $\rhs$.
\end{itemize}

We attach all generated rules in supplemental materials. The rules are arranged in folders with all files ``*.aux'' corresponding to DAGs representing the LHS (d.aux), RHS (f.aux) and predicate (p.aux).

\begin{lstlisting}[language=TeX]
$ cat ./119/d.aux
0 = S BOOL N_3
1 = S BOOL N_4
2 = S BOOL N_2
3 = S BOOL N_1
4 = OR BOOL 0 1
5 = OR BOOL 4 2
6 = OR BOOL 5 3
7 = AND BOOL 6 1
7
$ cat ./119/f.aux
0 = S BOOL N_3
1 = S BOOL N_4
2 = S BOOL N_2
3 = S BOOL N_1
1

$ cat ./119/p.aux
0 = S BOOL N_3
1 = S BOOL N_4
2 = S BOOL N_2
3 = S BOOL N_1
4 = CONST BIT 1
4
\end{lstlisting}

\subsubsection{Code Generation}
 
We implemented the code generation component in the \sk{} back-end. The Code Generation phase takes the rules provided to it as an input (rule signature strings or separate pattern DAGs in files) and generates C++ code that implements efficient matching of \lhs{} of all the rules together and hard-codes all symmetries of the rules for matching. The generated C++ code is complied again with minimal libraries from \sk{} back-end to produce a binary for the \rewriter. The \rewriter{} is then run on existing set of DAGs to produce reduced simplified DAGs.

We attach auto-generated code in files names ``matches.cpp'' which is compiled with existing \sk{} back-end code.

\subsubsection{Auto-tuning Rules}

We implemented the evaluation function using calls to multiple parallel instances of \sk{} for \testset{} benchmarks in Python. The tuner interacts with a MySQL database to store and analyze existing information about the experiments (run-times, configurations of rules, etc). Note that because

We attach the auto-tuner specification code as python scripts that show the evaluation function.

\subsection{Conclusion}

We conclude by reiterating the contributions made in this paper. We presented a framework \tool{} that automates generation of domain specific simplifiers, for Solvers and other tools employing Solvers, based on conditional rewrite rules. We also tested it for multiple domains in \sk{} showing that \tool{} can generate good simplifiers in reasonable amounts of time and with low cost of computational power. Simplifiers generated by \tool{} perform better than the \doopt{} simplifier on average in \sk{} and are very domain specific i.e. a simplifier obtained for a particular domain will most likely not work well on another different domain. 
%e also showed generality of \tool{} framework by extending it and generating formula simplifiers for SMTLIB benchmarks. %One piece of evidence, in a relatively short amount of time,reduction in formula sizes but it happens to be redundant, we can re-target

 %may add later
%
% ---- Bibliography ----
%

\end{document}